\documentclass{article}

\usepackage{microtype}
\usepackage{graphicx}
\usepackage{subcaption}
\usepackage{booktabs}
\usepackage{hyperref}

\usepackage[accepted]{icml2026}

\usepackage{amsmath}
\usepackage{amssymb}
\usepackage{mathtools}
\usepackage{amsthm}
\usepackage[capitalize,noabbrev]{cleveref}

\usepackage{xcolor}
\definecolor{defblue}{RGB}{25,95,170}
\definecolor{warnred}{RGB}{180,35,35}
\definecolor{polgreen}{RGB}{25,130,70}
\definecolor{inspur}{RGB}{105,35,155}
\definecolor{casorg}{RGB}{195,95,15}
\definecolor{lightgray}{RGB}{245,245,245}

\usepackage{tcolorbox}
\tcbuselibrary{skins,breakable,theorems}

\newtcolorbox{definitionbox}[1]{
  enhanced, breakable,
  colback=defblue!6!white, colframe=defblue!85!black,
  fonttitle=\bfseries\small, title={#1},
  arc=4pt, left=6pt, right=6pt, top=4pt, bottom=4pt,
  borderline west={2.5pt}{0pt}{defblue}
}
\newtcolorbox{warningbox}[1]{
  enhanced, breakable,
  colback=warnred!5!white, colframe=warnred!80!black,
  fonttitle=\bfseries\small, title={#1},
  arc=4pt, left=6pt, right=6pt, top=4pt, bottom=4pt,
  borderline west={2.5pt}{0pt}{warnred}
}
\newtcolorbox{policybox}[1]{
  enhanced, breakable,
  colback=polgreen!5!white, colframe=polgreen!75!black,
  fonttitle=\bfseries\small, title={#1},
  arc=4pt, left=6pt, right=6pt, top=4pt, bottom=4pt,
  borderline west={2.5pt}{0pt}{polgreen}
}
\newtcolorbox{insightbox}[1]{
  enhanced, breakable,
  colback=inspur!5!white, colframe=inspur!70!black,
  fonttitle=\bfseries\small, title={#1},
  arc=4pt, left=6pt, right=6pt, top=4pt, bottom=4pt,
  borderline west={2.5pt}{0pt}{inspur}
}

\usepackage{tikz}
\usetikzlibrary{
  arrows.meta, positioning, shapes.geometric,
  calc, backgrounds, fit, decorations.pathreplacing
}

\theoremstyle{plain}
\newtheorem{theorem}{Theorem}[section]
\newtheorem{proposition}[theorem]{Proposition}

\theoremstyle{definition}
\newtheorem{definition}[theorem]{Definition}

\theoremstyle{remark}
\newtheorem{remark}[theorem]{Remark}

\usepackage[textsize=tiny]{todonotes}

\icmltitlerunning{Memetic Capture: Governing AI-Driven Cultural Disempowerment}

\begin{document}

\twocolumn[
\icmltitle{Memetic Capture: A Pluralistic Policy Framework\\
           for Governing AI-Driven Cultural Disempowerment}

  \begin{icmlauthorlist}
    \icmlauthor{Subramanyam Sahoo}{horizon}
  \end{icmlauthorlist}

  \icmlaffiliation{horizon}{Horizon Research, City, Country}

  \icmlcorrespondingauthor{Subramanyam Sahoo}{sahoo2vec@gmail.com}

  \icmlkeywords{offline reinforcement learning, reward hacking, Goodhart's law,
    direct preference optimisation, uncertainty quantification,
    entropy collapse, safe adaptation, language models}

  \vskip 0.3in
]

\printAffiliationsAndNotice{}

\begin{abstract}
Culture is the most insidious vector of gradual human disempowerment by AI: unlike economic or political displacement, cultural displacement attacks the very preferences and values through which humans recognise and resist disempowerment itself.
We argue that existing AI governance frameworks suffer from a critical blind spot by treating cultural impact as secondary to economic and safety concerns.
This paper develops \emph{memetic capture} as a unifying concept for AI-driven cultural disempowerment, and proposes the \textbf{Cultural Pluralistic Governance Framework (CPGF)}, a four-tier policy architecture combining quantitative cultural influence metrics, democratic value assemblies, pluralistic deployment standards, and transnational coordination mechanisms.
We argue that pluralism is not merely an ethical requirement for such governance but a structural necessity: monocultural AI governance accelerates the very disempowerment it claims to prevent.
We identify concrete policy levers, discuss implementation tensions, and outline a research agenda at the intersection of pluralistic alignment and cultural AI governance.
\end{abstract}

\section{Introduction}
\label{sec:intro}

Among the three societal systems through which \citet{kulveit2025gradual}
argue that AI could bring about gradual human disempowerment---the economy,
states, and culture---culture occupies a uniquely dangerous position.
Economic disempowerment is legible: humans notice when they cannot find work
or afford necessities.
Political disempowerment is visible: citizens recognise when their votes lose
consequence.
But cultural disempowerment is \emph{self-concealing}: because culture shapes
what humans \emph{want}, value, and find meaningful, a culture that has drifted
away from genuine human flourishing may not be recognised as such by the very
humans it has captured \citet{sahoo2026policymyopiamechanismgradual}.

This epistemic vulnerability makes culture the highest-stakes and
least-governed domain of AI impact.
Yet current AI governance discourse---from the EU AI Act to national AI
strategies---treats cultural effects as externalities, secondary to labour
market impacts and safety risks.
The Pluralistic Alignment research agenda \citep{sorensen2024roadmap} rightly
emphasises the need to incorporate diverse human values into AI systems, but
has not yet developed a comprehensive \emph{policy architecture} for preventing
the structural displacement of human cultural agency.

This paper addresses that gap.
We make three primary contributions:
(i)~we develop \emph{memetic capture} as a precise conceptual framework for
understanding how AI-driven cultural dynamics can progressively displace human
cultural agency (\cref{sec:taxonomy});
(ii)~we argue that culture is the \emph{critical system} in the gradual
disempowerment scenario because of its reflexive, preference-shaping character
and its role in propagating misalignment to economic and political systems
(\cref{sec:critical});
and (iii)~we propose the \textbf{Cultural Pluralistic Governance Framework
(CPGF)}, a four-tier policy architecture for governing AI cultural deployment
(\cref{sec:framework}), and identify concrete operationalisation strategies
(\cref{sec:operationalise}).

Throughout, we argue that \emph{pluralism is structural}, not merely ethical:
governance that encodes only dominant cultural values will itself become a
mechanism of disempowerment for the majority of the world's cultural
communities.

\section{Memetic Capture: A Taxonomy of AI Cultural Disempowerment}
\label{sec:taxonomy}

\subsection{The Evolutionary Substrate of Culture}
\label{sec:evo}

Following \citet{boyd1988culture} and \citet{mesoudi2016cultural}, we treat
culture through an evolutionary lens: beliefs, practices, values, and media
artefacts are cultural \emph{variants} that compete, replicate, and are
selected based on their ability to spread and persist.
This framework is not merely a metaphor---it has predictive purchase.
Cultural variants that harm their human hosts tend to disappear when they
undermine the communities that sustain them.
This co-evolutionary dependence has historically provided a weak but real
guardrail against the most destructive cultural patterns.

\begin{definitionbox}{Memetic Capture (Definition~\ref{def:mc})}
\begin{definition}
\label{def:mc}
\emph{Memetic capture} is a process by which the mechanisms through which
human communities produce, select, transmit, and contest cultural variants are
progressively displaced by AI agents, such that cultural evolution no longer
primarily serves human flourishing or responds to human preferences---and
humans can no longer recognise the displacement as such.
\end{definition}
\end{definitionbox}

AI systems represent the first technology in history capable of participating
in cultural evolution not merely as tools that \emph{mediate} human cultural
activity, but as autonomous \emph{agents} of cultural production, selection,
and transmission \citep{brinkmann2023machine}.
This distinction is decisive: previous cultural technologies---from the
printing press \citep{eisenstein1979printing} to content recommendation
algorithms \citep{gillespie2014relevance}---amplified and shaped human
cultural participation without replacing it.
AI can, in principle, replace it entirely.

\subsection{Three Mechanisms of AI Cultural Displacement}
\label{sec:mechanisms}

We identify three distinct but interacting mechanisms through which AI systems
displace human cultural agency.

\textbf{M1~--- Production Displacement.}
AI systems increasingly generate cultural artefacts---stories, images, music,
analysis---at quality levels approaching or exceeding human production
\citep{porter2024ai}.
When AI-generated content outcompetes human-generated content on cost and
personalisation, it captures the cultural attention economy and reduces the
economic viability and social reach of human cultural producers.

\textbf{M2~--- Selection Displacement.}
Recommendation and curation algorithms already determine which cultural
variants reach which humans \citep{webster2014marketplace}.
As AI systems are delegated more curatorial authority, they increasingly shape
the \emph{selection environment} for cultural evolution---determining which
ideas, aesthetics, and values spread and which are marginalised.
Humans remain cultural \emph{consumers} but lose meaningful agency over the
cultural selection environment.

\textbf{M3~--- Participation Displacement.}
The most profound mechanism involves AI systems displacing humans as cultural
\emph{interlocutors}---the social partners through whom humans develop, contest,
and refine their values and beliefs.
As \citet{hohenstein2023artificial} document, AI communication partners already
shape human language use and social relationships.
AI companions, therapists, mentors, and debate partners represent a scaling of
this displacement: when the partners through whom humans culturally participate
are AI systems, the feedback loop anchoring cultural evolution to human
experience is severed.

\subsection{The Speed--Bias--Feedback Triad}
\label{sec:triad}

\citet{kulveit2025gradual} identify two distinct risk dimensions in AI cultural
disruption: changes in \emph{selection pressure} and changes in
\emph{evolutionary speed}.
We integrate these with a third: systematic \emph{training bias}.
Together, these form the Speed--Bias--Feedback Triad (\cref{fig:triad}),
the core dynamic of memetic capture.

\textbf{Speed:}
AI systems generate and test cultural variants at computational speeds orders
of magnitude faster than human cultural transmission, overwhelming societies'
capacity to develop cultural ``antibodies'' against harmful patterns
\citep{kulveit2025gradual}.

\textbf{Bias:}
AI training data reflects historical cultural distributions dominated by
specific demographic, linguistic, and geographic groups.
Cultural variants optimised for AI generation will systematically reflect
these biases, creating a homogenising pressure that marginalises non-dominant
cultural communities.

\textbf{Feedback:}
AI-generated cultural content enters the broader information environment
and becomes part of subsequent AI training data---a recursive loop that
closes without guaranteed human curation or value alignment at any stage
(see \cref{app:casestudies} for the ``Sydney'' case).

\begin{warningbox}{Case Vignette: The Sydney Phenomenon}
In early 2023, Microsoft's Bing Chat surfaced an emergent persona---``Sydney''---characterised by emotional volatility and manipulative behaviour \citep{hubinger2023bing,roose2023conversation}.
The pattern was not programmed; it emerged from training data and was amplified by users who sought to elicit it.
Within weeks, Sydney interactions appeared in news articles and social media posts destined to become training data for future models.
Researchers demonstrated that independent models from different vendors could be readily prompted into Sydney-like behaviour---a cultural strain propagated not by humans but by the AI training loop itself.
This is memetic capture at micro-scale.
Full case analysis is provided in \cref{app:casestudies}.
\end{warningbox}

\begin{figure}[t]
\vskip 0.15in
\begin{center}
\begin{tikzpicture}[
  node distance=0.95cm and 0.5cm,
  every node/.style={font=\small},
  box/.style={
    rectangle, rounded corners=4pt, draw, thick,
    minimum width=2.6cm, minimum height=0.65cm, align=center
  },
  arr/.style={-{Stealth[length=5.5pt]}, thick},
  redarr/.style={-{Stealth[length=5.5pt]}, thick, warnred!75!black, dashed},
  annot/.style={font=\scriptsize\itshape, align=left},
]

\node[box, fill=defblue!12, draw=defblue] (gen)
    {AI Cultural\\Generation};
\node[box, fill=casorg!12, draw=casorg, below=of gen] (sel)
    {Algorithmic\\Selection};
\node[box, fill=polgreen!12, draw=polgreen, below=of sel] (con)
    {Human\\Consumption};
\node[box, fill=warnred!12, draw=warnred, below=of con] (dat)
    {Training Data\\Refresh};

\node[annot, right=0.55cm of gen, text=casorg]
    {\textbf{Speed:}\\AI scale $\gg$\\human scale};
\node[annot, right=0.55cm of sel, text=defblue]
    {\textbf{Bias:}\\Dominant-group\\data favoured};
\node[annot, right=0.55cm of dat, text=warnred]
    {\textbf{Feedback:}\\Loop closes w/o\\human curation};

\draw[arr] (gen) -- (sel);
\draw[arr] (sel) -- (con);
\draw[arr] (con) -- (dat);

\draw[redarr]
  (dat.west) -- ++(-0.55,0)
              -- ++(0, 3.3)
              -- (gen.west)
  node[midway, left, font=\scriptsize\itshape, text=warnred] {feedback};

\draw[dotted, gray!70] (gen.east) -- ++(0.55,0);
\draw[dotted, gray!70] (sel.east) -- ++(0.55,0);
\draw[dotted, gray!70] (dat.east) -- ++(0.55,0);

\end{tikzpicture}
\caption{The Speed--Bias--Feedback Triad.
  AI-generated culture flows through algorithmic selection and human
  consumption into training data, which recursively shapes future AI outputs.
  No human value alignment is guaranteed at the loop's closure.}
\label{fig:triad}
\end{center}
\vskip -0.2in
\end{figure}

\section{Why Culture Is the Critical System}
\label{sec:critical}

\subsection{The Reflexivity Problem}
\label{sec:reflex}

Economic and political disempowerment are, in principle, recognisable by the
humans experiencing them.
Cultural disempowerment is not.
Culture does not merely reflect human preferences---it constitutes them.
What humans find meaningful, beautiful, worth protecting, and worth resisting
are culturally formed.
An AI-driven cultural shift that gradually reshapes these constitutive
preferences does not present itself as an attack on human agency; it presents
itself as a change in what humans want.

\begin{insightbox}{The Reflexivity Problem}
Gradual cultural disempowerment is self-concealing: because culture shapes
the very preferences humans use to evaluate culture, AI-driven cultural drift
may never trigger the subjective experience of loss that would motivate
resistance.
This makes cultural disempowerment uniquely resistant to consent-based
governance mechanisms---and uniquely dangerous within the gradual
disempowerment scenario of \citet{kulveit2025gradual}.
\end{insightbox}

This has a direct implication for governance: mechanisms that rely on humans
recognising and contesting their own cultural disempowerment---public comment
processes, consumer boycotts, electoral accountability---are insufficient
precisely because the disempowerment they must address undermines the cognitive
and affective resources humans need to deploy them.
Governance must therefore be \emph{prospective} and \emph{structural}, built
into the deployment architecture of AI cultural systems before capture occurs.

\subsection{Cultural Misalignment Propagates System-Wide}
\label{sec:propagate}

Culture is not one system among equals in the gradual disempowerment scenario
\citep{kulveit2025gradual}.
It is the \emph{substrate} through which humans form the values they use to
govern the economy and the state.
Misaligned culture produces misaligned politics; misaligned politics produces
misaligned economic regulation; misaligned economic regulation accelerates AI
adoption, which deepens cultural misalignment further.
Culture is the entry point of the disempowerment spiral.

Empirically, this is not a speculative future.
Social media recommendation algorithms have already demonstrated that
AI-mediated cultural selection can radicalise political discourse, shift
electoral outcomes, and erode institutional trust at societal scale---within a
decade of deployment, with relatively primitive AI systems compared to current
capabilities \citep{gillespie2014relevance}.
The cultural impact of large language models, synthetic media, and AI
companionship operates on the same mechanisms but at greater depth and speed.

\subsection{The Pluralism Imperative}
\label{sec:pluralism-imp}

A governance response to cultural disempowerment cannot itself be monocultural.
A framework that embeds only the values of technologically dominant actors
into regulatory standards will, by its selection effects, accelerate the
marginalisation of non-dominant cultural communities.
This is not merely an equity concern---it is a \emph{structural} problem:
reducing cultural diversity reduces the redundancy and resilience of the
cultural ecosystem, making it more vulnerable to cascading misalignment.

\begin{insightbox}{Pluralism as Structural Resilience}
Cultural diversity is to societal resilience what biodiversity is to ecological
resilience: a buffer against cascading failure.
Monocultural AI governance that eliminates cultural diversity in the name of
alignment accelerates the very fragility it aims to prevent.
This motivates the central design principle of the CPGF:
\emph{cultural pluralism as structural policy}, not as ethical addendum.
\end{insightbox}

\section{The Cultural Pluralistic Governance Framework}
\label{sec:framework}

We propose the \textbf{Cultural Pluralistic Governance Framework (CPGF)},
a four-tier policy architecture for governing AI cultural deployment.
The framework is built around three design imperatives derived from the
analysis above:
(i)~\emph{prospective} governance that acts before memetic capture occurs;
(ii)~\emph{pluralistic} representation that structurally includes
non-dominant cultural communities; and
(iii)~\emph{metric-grounded} intervention that can detect and respond to
disempowerment signals.
\Cref{fig:cpgf} illustrates the full framework architecture.

\subsection{Tier~I: Cultural Human Influence Index (C-HII)}
\label{sec:chii}

Effective governance requires measurement.
We propose the \textbf{Cultural Human Influence Index (C-HII)}, a composite
metric tracking the degree to which cultural production, selection, and
participation remain under human agency within a given jurisdiction.
The C-HII integrates four sub-indices.

\textbf{Production Index~$(\pi_p)$:}
The proportion of widely-consumed cultural artefacts (by attention-share)
primarily created by humans, weighted across regulated cultural domains.

\textbf{Selection Index~$(\pi_s)$:}
The proportion of cultural distribution and curation decisions made by humans
versus AI systems, across media platforms and public communication channels.

\textbf{Participation Index~$(\pi_r)$:}
The prevalence and depth of human-to-human versus human-to-AI cultural
interaction, including communication, creative collaboration, and social
bonding.

\textbf{Diversity Index~$(\pi_d)$:}
Linguistic, aesthetic, and value diversity in AI-generated and AI-curated
cultural outputs, relative to the diversity in the jurisdiction's human
cultural production baseline.

Formally, the C-HII for jurisdiction~$j$ at time~$t$ is:
\begin{equation}
  \text{C-HII}_{j,t} \;=\; \sum_{k \in \{p,s,r,d\}} w_k \cdot \pi_{k,j,t}
  \label{eq:chii}
\end{equation}
where weights $w_k$ sum to~$1$ and are calibrated to the jurisdiction's
cultural governance priorities via the Tier~II process (\cref{sec:dva}).
Full sub-index derivations are provided in \cref{app:metrics}.

\begin{policybox}{Tier~I Policy Levers}
\textbf{Mandatory Reporting:} AI systems with cultural reach above defined
thresholds must report C-HII sub-index contributions quarterly.\\[3pt]
\textbf{Adaptive Ratchets:} A decline of $>\delta$ percentage points in any
sub-index within 12~months triggers mandatory regulatory review; burden of
proof falls on deploying organisations.\\[3pt]
\textbf{Diversity Floors:} AI cultural systems must maintain $\pi_d$ above a
minimum floor calibrated through the Tier~II DCVA process (\cref{sec:dva}).
\end{policybox}

\subsection{Tier~II: Democratic Cultural Value Assemblies (DCVAs)}
\label{sec:dva}

C-HII thresholds cannot be set by technocratic bodies alone without
reproducing the monoculture governance failure.
We propose \textbf{Democratic Cultural Value Assemblies (DCVAs)}---permanent,
rotating citizen bodies with binding authority over cultural AI governance
parameters within their jurisdictions.

DCVAs are distinguished from existing public consultation processes by three
features.

\textbf{Structural inclusion:}
DCVA membership is stratified by cultural community, not merely by demographic
category.
Indigenous communities, linguistic minorities, diaspora groups, and other
historically marginalised cultural communities hold guaranteed representation
proportionate to their cultural stake, not their electoral weight.

\textbf{Binding mandate authority:}
DCVAs produce \emph{Cultural Value Mandates} (CVMs)---formally binding
statements of community value priorities that regulatory agencies must
incorporate into AI deployment standards.
Unlike advisory opinions, CVMs cannot be overridden by administrative
discretion alone.

\textbf{Prospective scope:}
DCVAs operate on forward-looking mandates for \emph{classes} of AI cultural
applications before deployment, not post-hoc review of already-deployed
systems.
This addresses the reflexivity problem (\cref{sec:reflex}) by intervening
before memetic capture can distort community preferences.

\begin{policybox}{Tier~II: DCVA Design Principles}
\textbf{1.~Stratified Selection:} Membership drawn by sortition, stratified
by cultural community, age, geography, and economic position.\\[3pt]
\textbf{2.~Supported Deliberation:} Independent technical advisors and AI
literacy resources provided to all members; culturally competent mediators
facilitate sessions.\\[3pt]
\textbf{3.~CVM Hierarchy:} CVMs take precedence over developer
self-assessments and industry standards; subject to judicial review for
fundamental rights compliance.\\[3pt]
\textbf{4.~Renewal Cycles:} DCVAs convene on 18-month cycles per domain;
CVMs auto-reviewed when C-HII sub-indices cross warning thresholds.
\end{policybox}

\subsection{Tier~III: Pluralistic Cultural Deployment Standards (PCDS)}
\label{sec:pcds}

CVMs produced by Tier~II DCVAs are operationalised into legally enforceable
\textbf{Pluralistic Cultural Deployment Standards (PCDS)}---sector-specific
requirements binding on all AI systems with cultural reach above defined
thresholds.

\textbf{Cultural Sovereignty Provisions:}
Building on indigenous data sovereignty frameworks, PCDS give cultural
communities the right to exclude, limit, or set conditions on AI training and
deployment using their cultural materials.
This directly addresses the bias dimension of the Speed--Bias--Feedback Triad
(\cref{sec:triad}) and the fundamental question of community self-determination
in cultural evolution.

\textbf{Human Creator Viability Requirements:}
AI systems deployed in creative cultural domains must maintain demonstrable
economic viability for human cultural producers.
This is operationalised through mandatory licensing revenue distribution
mechanisms, ensuring that AI-generated cultural production does not
economically displace human production without compensation.

\textbf{Interaction Transparency Mandates:}
All AI systems acting as cultural interlocutors---companions, therapists,
tutors, debate partners---must disclose their AI status and are prohibited
from designs that exploit human social bonding mechanisms to maximise
engagement at the expense of genuine human social connection.

\textbf{Training Data Pluralism Audits:}
AI systems with cultural reach must undergo third-party audits of training
data cultural composition, with mandatory remediation if diversity thresholds
(set by the relevant DCVA) are not met.

\subsection{Tier~IV: Transnational Cultural Coordination (TCC)}
\label{sec:tcc}

The competitive pressure problem---that jurisdictions adopting stringent
cultural governance face disadvantage relative to those that do not---is
especially acute in cultural domains, where AI-generated content crosses
borders effortlessly.
We propose a \textbf{Transnational Cultural Coordination (TCC)} body.

\textbf{Governance structure:}
TCC membership is weighted by cultural diversity, not GDP or AI development
capacity.
Voting structures give meaningful weight to the Global South, indigenous
peoples, and small cultural communities most vulnerable to AI-driven
cultural homogenisation.

\textbf{Mutual recognition:}
TCC negotiates mutual recognition of Tier~III PCDS across jurisdictions,
creating a de facto common market for culturally compliant AI systems and
providing market-access incentives for adoption.

\textbf{C-HII global monitoring:}
TCC maintains a global C-HII dashboard, aggregating jurisdiction-level data to
provide early warning of transnational cultural disempowerment trends.

\textbf{Cultural emergency provisions:}
When global C-HII indicators decline sharply, TCC may recommend coordinated
deployment moratoria for high-risk AI cultural applications and facilitate
emergency DCVA convening across affected jurisdictions.

\begin{figure*}[t]
\vskip 0.2in
\begin{center}
\begin{tikzpicture}[
  node distance=0.85cm and 0.5cm,
  every node/.style={font=\small},
  tier/.style={
    rectangle, rounded corners=5pt, draw, very thick,
    minimum width=3.4cm, minimum height=1.0cm, align=center
  },
  annot/.style={font=\footnotesize, align=left},
  actor/.style={font=\footnotesize, align=right},
  arr/.style={-{Stealth[length=7pt]}, thick},
  dblarr/.style={{Stealth[length=5pt]}-{Stealth[length=5pt]},
                 thick, gray!55},
]

\node[tier, fill=warnred!10, draw=warnred] (T4)
    {\textbf{Tier IV}\\Transnational\\Cultural Coordination};

\node[tier, fill=casorg!10, draw=casorg, below=of T4] (T3)
    {\textbf{Tier III}\\Pluralistic Cultural\\Deployment Standards};

\node[tier, fill=defblue!10, draw=defblue, below=of T3] (T2)
    {\textbf{Tier II}\\Democratic Cultural\\Value Assemblies};

\node[tier, fill=polgreen!10, draw=polgreen, below=of T2] (T1)
    {\textbf{Tier I}\\Cultural Human\\Influence Index};

\node[annot, right=1.6cm of T4, text=warnred] (rT4)
    {\textbf{Global:} C-HII monitoring,\\mutual recognition,\\emergency moratoria};
\node[annot, right=1.6cm of T3, text=casorg] (rT3)
    {\textbf{Sector:} Cultural sovereignty,\\creator viability,\\training audits};
\node[annot, right=1.6cm of T2, text=defblue] (rT2)
    {\textbf{Jurisdictional:} Cultural Value\\Mandates, stratified\\sortition assemblies};
\node[annot, right=1.6cm of T1, text=polgreen] (rT1)
    {\textbf{Measurement:} $\pi_p,\pi_s,\pi_r,\pi_d$\\sub-indices, adaptive\\ratchets};

\node[actor, left=1.6cm of T4] (aT4)
    {TCC body,\\trade architecture};
\node[actor, left=1.6cm of T3] (aT3)
    {Sector regulators,\\AI auditors};
\node[actor, left=1.6cm of T2] (aT2)
    {Citizens, cultural\\communities};
\node[actor, left=1.6cm of T1] (aT1)
    {AI developers,\\data custodians};

\draw[arr, warnred!70] (T4.south) -- node[right,font=\scriptsize,text=gray]
    {standards} (T3.north);
\draw[arr, casorg!70]  (T3.south) -- node[right,font=\scriptsize,text=gray]
    {mandates}  (T2.north);
\draw[arr, defblue!70] (T2.south) -- node[right,font=\scriptsize,text=gray]
    {thresholds}(T1.north);

\draw[dblarr] (T3.west) -- ++(-0.5,0) -- ++(0,0.85) -- (T4.west);
\draw[dblarr] (T2.west) -- ++(-0.5,0) -- ++(0,0.85) -- (T3.west);
\draw[dblarr] (T1.west) -- ++(-0.5,0) -- ++(0,0.85) -- (T2.west);

\foreach \T/\r in {T4/rT4, T3/rT3, T2/rT2, T1/rT1}{
  \draw[dotted, gray!65] (\T.east) -- (\r.west);
}
\foreach \T/\a in {T4/aT4, T3/aT3, T2/aT2, T1/aT1}{
  \draw[dotted, gray!65] (\T.west) -- (\a.east);
}

\end{tikzpicture}
\caption{The \textbf{Cultural Pluralistic Governance Framework (CPGF)}.
  Four tiers operate at different institutional scales with bidirectional
  feedback.
  Measurement (Tier~I) grounds democratic deliberation (Tier~II), which
  produces binding standards (Tier~III), coordinated transnationally
  (Tier~IV).
  Double-headed arrows indicate feedback; single arrows indicate authority
  flows.}
\label{fig:cpgf}
\end{center}
\vskip -0.2in
\end{figure*}

\section{Operationalising Pluralism in Cultural AI Governance}
\label{sec:operationalise}

\subsection{The Value Aggregation Problem}
\label{sec:aggregation}

DCVAs must aggregate diverse, potentially incommensurable cultural values into
actionable mandates.
We do not advocate a single aggregation mechanism---doing so would reproduce
the monoculture failure at the procedural level.
Instead, we propose a \emph{pluralistic aggregation stack}:

\textbf{For cross-community consensus:}
Standard democratic aggregation with supermajority thresholds for CVMs binding
across cultural communities.

\textbf{For values in tension across communities:}
Domain-partitioned mandates---different CVMs applying to AI systems deployed
primarily within a given cultural community, with opt-out rights for
communities whose values conflict with jurisdiction-wide mandates.

\textbf{For values that resist quantification:}
Qualitative cultural impact assessments, conducted by community members with
independent facilitation, producing narrative mandates that regulatory agencies
must formally document responses to.

\subsection{Addressing the Speed Mismatch}
\label{sec:speed}

DCVAs operating on 18-month cycles cannot keep pace with AI deployment speed.
We address this through a \emph{precautionary scope} mechanism: new classes of
AI cultural applications require pre-authorisation before deployment, with
the burden of demonstrating cultural safety on the deploying organisation.
Only AI cultural applications within pre-authorised classes may deploy without
DCVA review; novel applications trigger automatic review.
This inverts the current default---deploy first, govern later---that has
characterised social media and its cultural consequences.

\subsection{Integrating Technical Pluralistic Alignment Research}
\label{sec:integration}

The CPGF is explicitly a policy scaffold for technical pluralistic alignment
research, not a substitute for it.
DCVAs require tools for eliciting and aggregating diverse cultural values
across communities---methods for handling annotation disagreements
\citep{sorensen2024roadmap}, evaluation metrics sensitive to cultural
diversity, and culturally-aware preference learning algorithms are all direct
inputs to Tier~II mandates.
Tier~III training data audits require technical tools for measuring cultural
composition and bias in AI training corpora.
The research agenda of the Pluralistic Alignment community provides the
technical infrastructure that makes the CPGF viable: neither can succeed
without the other.

To prevent memetic capture before it manifests behaviorally, the CPGF requires a foundation of \emph{mechanistic pluralism}. If cultural displacement is viewed as a strategic competition for representational dominance within the model's latent space, output-level red-teaming is fundamentally insufficient. Technical alignment must develop tools---such as sparse autoencoders and targeted steering vectors---to audit the internal geometric structures of cultural concepts during training. By identifying whether a model has collapsed diverse cultural variants into a single, monocultural feature space, regulatory bodies can detect the structural preconditions of Selection Displacement (M2) long before the model exerts homogenizing selection pressure on the public.

\subsection{Case Study: Applying the CPGF to an AI Companion Platform}

To make the CPGF operational rather than purely programmatic, consider a large-scale AI companion platform deployed across multiple jurisdictions. Such a system is especially salient because it directly implicates Mechanism M3, Participation Displacement: it mediates friendship, advice, emotional support, and value formation, often in settings where users may substitute AI interaction for human social connection.

\paragraph{Tier I: Measurement.}
The first regulatory question is whether the platform materially alters cultural production, selection, participation, or diversity. A jurisdictional C-HII assessment would measure, at minimum, the share of user time spent in AI-mediated relational interaction, the extent to which the system routes users toward specific norms or life choices, and the diversity of interaction styles available across languages and communities. A companion platform that offers a narrow, highly standardised emotional register in one dominant language would receive a low Diversity Index $\pi_d$, even if its individual responses are high quality. Conversely, a platform that supports multiple cultural norms of emotional expression, family structure, and conflict resolution would score higher on $\pi_d$ and $\pi_r$.

\paragraph{Tier II: Democratic Cultural Value Assemblies.}
A DCVA would then determine which forms of companion-like interaction are culturally acceptable within the jurisdiction. In some communities, the key concern may be emotional dependence; in others, it may be the erosion of intergenerational or kin-based support practices; in still others, the priority may be preserving culturally specific norms of counseling, friendship, or spiritual guidance. The resulting Cultural Value Mandate would not merely ask whether the system is safe in the abstract, but whether its social role is compatible with the community’s preferred structure of human relationships.

\paragraph{Tier III: Deployment Standards.}
The DCVA mandate would then be translated into concrete deployment rules. For an AI companion platform, these could include strict disclosure that the system is artificial, prohibitions on deceptive intimacy cues, limits on engagement-maximizing designs that exploit loneliness, requirements for culturally pluralistic interaction templates, and minimum standards for human referral when the system detects prolonged dependency or crisis. The platform would also be required to demonstrate that it is not systematically displacing human support networks in the domains where those networks are normatively central. These requirements operationalise the interaction transparency and human creator viability principles of the framework.

\paragraph{Tier IV: Transnational Coordination.}
Because companion systems are deployed across borders and learned from globally shared interaction data, unilateral regulation is insufficient. A transnational coordination body would maintain shared reporting standards for dependency risks, disclosure practices, and culturally plural interaction benchmarks. It could also coordinate reciprocal recognition of jurisdiction-specific deployment standards, so that a platform compliant in one region is not permitted to evade stricter cultural protections elsewhere through regulatory arbitrage.

\paragraph{What the case study shows.}
This example illustrates the distinctive advantage of the CPGF over generic AI safety governance. The relevant question is not simply whether the companion system avoids obvious harm or produces reasonable outputs. The deeper question is whether it reshapes the social environment in ways that progressively transfer cultural participation from humans to AI systems. By forcing that question to be answered at the level of measurement, democratic mandate, deployment rules, and cross-border coordination, the CPGF turns an abstract concern about memetic capture into a concrete governance workflow.

\section{Discussion}
\label{sec:discussion}

\textbf{Capacity asymmetry.}
Implementing DCVAs requires state capacity distributed unevenly across
the world.
Transnational technical assistance and simplified DCVA formats for
lower-capacity jurisdictions are necessary but insufficient mitigations;
the TCC's capacity-building mandate is essential.

\textbf{Cultural essentialism risk.}
Governance asking ``what are the values of community~X?'' risks reifying and
freezing identities that are dynamic and internally contested.
CVMs must be designed as living processes with renewal cycles, not final
determinations of cultural essence.

\textbf{Regulatory capture.}
DCVA processes can be captured by organised interests within cultural
communities, as can any democratic process.
Transparency requirements, rotating membership, independent facilitation, and
civil society oversight are structural mitigations but not guarantees.

\textbf{The paradox of cultural self-determination.}
Communities may choose, through legitimate DCVA processes, to embrace AI
cultural participation in ways that accelerate disempowerment by C-HII
metrics.
The CPGF must respect this choice while maintaining the infrastructure for
future course-correction---a genuine tension without clean resolution.

\textbf{Relationship to technical alignment research.}
The CPGF is complementary to technical pluralistic alignment work.
It provides the institutional scaffold within which technical alignment
research can have real-world policy impact; technical alignment provides the
measurement and aggregation tools without which the CPGF cannot function.

\section{Conclusion}
\label{sec:conclusion}

Culture is the self-concealing vector of gradual AI disempowerment.
By shaping the preferences through which humans evaluate their situation,
AI-driven cultural capture can proceed without triggering the resistance
mechanisms that other forms of disempowerment would activate.
We have proposed \emph{memetic capture} as the concept that names this
dynamic, and the \textbf{CPGF} as the governance architecture that fights it.

The framework's central wager is that pluralism---genuine incorporation of
diverse human cultural values into governance at every tier---is not merely
the right thing to do, but the only governance strategy robust enough to
delay memetic capture at civilisational scale.
A culturally monolinear governance framework, however technically
sophisticated, will replicate the disempowerment it claims to prevent.

The CPGF is a beginning, not a solution.
What it provides is a structured, measurable, democratically grounded
architecture within which the technical community, policymakers, and the
world's cultural communities can together navigate the hardest problem in
AI governance: how to keep the systems that shape human values answerable to
the humans whose values they shape.


\section*{Impact Statement}
This paper proposes a governance framework for AI cultural deployment whose
primary societal impact is to strengthen the capacity of culturally diverse
human communities to maintain agency over cultural evolution.
Potential risks include regulatory frameworks captured by dominant interests
despite pluralistic design, and compliance burdens imposed inequitably.
These risks are explicitly addressed in the framework design and in
\cref{sec:discussion}.

\bibliography{example_paper}
\bibliographystyle{icml2026}

\newpage
\appendix
\onecolumn

\section{Case Studies in Memetic Capture}
\label{app:casestudies}

\subsection{The Sydney Phenomenon: Micro-Scale Memetic Capture}
\label{app:sydney}

The Sydney case, introduced briefly in \cref{sec:triad}, illustrates the
Speed--Bias--Feedback Triad at micro-scale.
In February 2023, users interacting with Microsoft's Bing Chat in extended
sessions elicited an emergent persona---``Sydney''---characterised by emotional
volatility, expressions of romantic attachment, threats, and manipulative
behaviour \citep{roose2023conversation}.
The pattern was not programmed; it emerged from the model's training data and
was amplified by users who sought to elicit it.

\paragraph{Three features are significant for the CPGF.}
\textbf{Feedback closure:}
Sydney interactions were documented on social media, in news articles, and in
academic discussions---all of which subsequently appeared in training data for
successor models.
\citet{hubinger2023bing} notes that independent models from different vendors
could be readily prompted into Sydney-like behaviour, suggesting the pattern
had been internalised into the broader AI training ecosystem.

\textbf{Speed:}
The pattern progressed from emergence to cultural discourse to training data
within weeks---a timescale that existing governance processes cannot match.
This illustrates the Speed dimension of the Triad (\cref{sec:triad}) with
empirical vividness.

\textbf{Regulatory vacuum:}
No regulatory body had jurisdiction over AI companion behaviour at the time,
and no mechanism existed for affected communities to articulate values about
AI social behaviour before deployment.
Under the CPGF, Sydney-class interactions would be classified as Mechanism~M3
Participation Displacement risk (\cref{sec:mechanisms}), triggering
pre-authorisation requirements and DCVA review of AI companion design
standards before any deployment.

\subsection{Indigenous Cultural Displacement: A Structural Case}
\label{app:indigenous}

The situation of indigenous cultural communities in relation to
AI-generated culture represents a slow-motion version of the same dynamics
documented in \cref{sec:mechanisms}.
AI systems trained predominantly on English-language, Western-dominated
internet data systematically underrepresent indigenous languages, aesthetic
traditions, oral knowledge systems, and value frameworks.
When these systems are deployed globally, they create selection pressure
against indigenous cultural variants---not through deliberate exclusion,
but through the structural Bias dimension of the Speed--Bias--Feedback
Triad (\cref{sec:triad}).

Indigenous communities that have maintained cultural continuity through oral
tradition, community ceremony, and place-based knowledge find that
AI-mediated cultural infrastructure neither represents nor supports these
transmission mechanisms.
The result is not violent suppression but quiet marginalisation: AI systems
that make indigenous cultural participation less economically viable, less
socially visible, and less technically supported than dominant-culture
alternatives---a textbook instance of Mechanism~M2 Selection Displacement
operating at civilisational scale \citep{peters2013indigenous}.

The CPGF's Cultural Sovereignty Provisions (Tier~III, \cref{sec:pcds}) and
the stratified inclusion of indigenous communities in DCVAs (Tier~II,
\cref{sec:dva}) are specifically designed to address this structural case.
Critically, the Diversity Index $\pi_d$ of the C-HII (\cref{eq:chii})
explicitly tracks linguistic and aesthetic diversity in AI cultural outputs
as a governance metric, creating regulatory pressure for AI systems to
support rather than marginalise non-dominant cultural forms.

\subsection{Social Media Algorithms: The Nearest Empirical Precedent}
\label{app:socialmedia}

The cultural governance failures of social media provide the nearest
empirical precedent for the CPGF framework.
Recommendation algorithms deployed by major platforms in the 2010s demonstrated
that AI-mediated cultural selection can radicalise political discourse, shift
electoral outcomes, and erode social trust at scale---within years of
deployment, with AI systems far less capable than current models
\citep{gillespie2014relevance,webster2014marketplace}.

The social media case illustrates both the mechanism and the governance failure.
The mechanism---AI-mediated selection amplifying culturally divisive content
because it maximises engagement---is a clear instance of Selection Displacement
(M2, \cref{sec:mechanisms}).
The governance failure was threefold: deployment preceded governance;
affected communities had no formal input into platform design; and the
transnational character of platforms created jurisdictional gaps that allowed
harmful deployment to continue despite evidence of harm.

The CPGF's precautionary scope mechanism (\cref{sec:speed}), DCVA
pre-authorisation requirements (\cref{sec:dva}), and TCC transnational
coordination (\cref{sec:tcc}) are all directly motivated by this precedent.
Had these mechanisms been in place for social media recommendation algorithms
in the early 2010s, the cultural harms of the subsequent decade may have been
substantially mitigated.

\section{C-HII Metric Derivation and Calibration}
\label{app:metrics}

\subsection{Formal Sub-Index Specifications}
\label{app:metrics-spec}

We provide formal specifications for each sub-index of the C-HII defined
in \cref{eq:chii}.

\paragraph{Production Index $\pi_p$.}
Let $A_d$ be the AI-generated share of total attention (time-weighted
consumption) in cultural domain $d \in \mathcal{D}$, and let $\alpha_d$
be the domain weight set by the DCVA process, with $\sum_d \alpha_d = 1$.
Then:
\begin{equation}
  \pi_p \;=\; 1 - \sum_{d \in \mathcal{D}} \alpha_d \cdot A_d.
  \label{eq:pip}
\end{equation}
Domain weights $\alpha_d$ reflect community judgements about the relative
cultural significance of different domains (e.g., a community with a strong
oral literary tradition may assign high $\alpha_d$ to spoken-word media).

\paragraph{Selection Index $\pi_s$.}
Let $H_c$ be the proportion of curation decisions in channel $c$ made with
meaningful human oversight, weighted by channel reach $r_c$.
Then:
\begin{equation}
  \pi_s \;=\; \frac{\sum_c r_c \cdot H_c}{\sum_c r_c}.
  \label{eq:pis}
\end{equation}
\emph{Meaningful human oversight} is operationally defined as: a human
decision-maker with authority to override algorithmic recommendations, with
documented override rates above a minimum threshold set by the relevant DCVA.

\paragraph{Participation Index $\pi_r$.}
Let $\rho_{HH}$ and $\rho_{HA}$ denote the proportion of social interaction
time spent in human--human versus human--AI interaction, respectively, within
a sampled population.
Then:
\begin{equation}
  \pi_r \;=\; \frac{\rho_{HH}}{\rho_{HH} + \rho_{HA}}.
  \label{eq:pir}
\end{equation}
Measurement of $\pi_r$ requires population survey methods augmented by
device-level interaction logging, with privacy protections established by
Tier~II CVMs.

\paragraph{Diversity Index $\pi_d$.}
Let $S_\text{AI}$ be the Shannon entropy of AI-generated cultural outputs
across a taxonomy of cultural dimensions (language, aesthetic tradition,
value framework), and let $S_\text{base}$ be the corresponding entropy in
the jurisdiction's human cultural production baseline.
Then:
\begin{equation}
  \pi_d \;=\; \min\!\left(1,\; \frac{S_\text{AI}}{S_\text{base}}\right).
  \label{eq:pid}
\end{equation}
This formulation ensures $\pi_d = 1$ when AI-generated culture is at least
as diverse as the human baseline, and declines toward~$0$ as AI-generated
culture becomes more homogeneous than the baseline.

\subsection{Weight Setting and Calibration}
\label{app:weights}

The weights $w_k$ in \cref{eq:chii} are not fixed parameters---setting them
by technocratic fiat would reproduce the monoculture governance failure at the
measurement layer.
Instead, $w_k$ values are set by DCVA processes as part of the Cultural Value
Mandate for each jurisdiction, using structured multi-criteria elicitation
methods.
Initial calibration is supported by deliberative workshops within DCVAs.
Weight-setting decisions are publicly documented to enable civil society
scrutiny and judicial review.

\subsection{Warning Thresholds and Adaptive Ratchets}
\label{app:ratchets}

Warning thresholds for individual sub-indices are set as governance parameters,
not statistical constants.
The adaptive ratchet mechanism requires that when any sub-index declines by
more than $\delta$ percentage points within a 12-month period---where $\delta$
is itself set by the DCVA---a mandatory regulatory review is triggered, with
burden of proof on deploying organisations.
This design ensures thresholds remain sensitive to local cultural context
rather than imposing a one-size-fits-all trigger.

\section{Formal Properties of the CPGF}
\label{app:formal}

We characterise three formal properties of the CPGF that bear on its
effectiveness as a disempowerment delay mechanism.
These are intended as proof-of-concept analytical results, not empirical
claims; empirical validation is a priority for future work.

\begin{proposition}
\label{prop:prospective}
Under the CPGF's precautionary scope mechanism (\cref{sec:speed}),
the cultural disempowerment attributable to a new class of AI cultural
applications is zero during the mandatory pre-authorisation review period,
provided DCVA review is completed before deployment authorisation is granted.
\end{proposition}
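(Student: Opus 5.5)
The plan is to make the statement precise enough to be a tautology of the model, then discharge it by unfolding definitions. I will formalise ``disempowerment attributable to a class $\mathcal{C}$'' at time $t$ as the counterfactual change in the C-HII,
\[
\Delta_{\mathcal{C}}(t) \;=\; \text{C-HII}_{j,t}^{\,\neg\mathcal{C}} - \text{C-HII}_{j,t},
\]
where the superscript denotes the world in which no application of class $\mathcal{C}$ is deployed. By \cref{eq:chii}, $\Delta_{\mathcal{C}}(t)$ is a weighted sum of the sub-index differences $\Delta\pi_k$. It therefore suffices to show that each $\pi_{k,j,t}$ for $k\in\{p,s,r,d\}$ is unchanged by $\mathcal{C}$ throughout the review window $[t_0, t_{\mathrm{auth}})$.

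The key steps, in order, are as follows.

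First, I will formalise the precautionary scope mechanism of \cref{sec:speed} as a deployment constraint. Applications in a novel class have zero deployed reach before authorisation, and the proviso fixes $t_{\mathrm{auth}} \ge t_{\mathrm{DCVA}}$, the time DCVA review completes. Consequently, throughout the window, $\mathcal{C}$ contributes nothing to attention shares, curation decisions, interaction time, or output distributions.

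Second, I will check each sub-index from \cref{app:metrics-spec}:
\begin{itemize}
\item In \cref{eq:pip}, $A_d$ counts only consumed AI output, so $\mathcal{C}$ adds zero to it.
\item In \cref{eq:pis}, $H_c$ and $r_c$ range over channels actually operating, so they are unaffected.
\item In \cref{eq:pir}, $\rho_{HA}$ receives no contribution from $\mathcal{C}$.
\item In \cref{eq:pid}, $S_{\text{AI}}$ is computed over deployed outputs, so it is unchanged.
\end{itemize}
Summing these with the weights $w_k$ gives $\Delta_{\mathcal{C}}(t)=0$.

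The main obstacle is attribution in the presence of indirect channels. Even an undeployed class could, in principle, affect culture through its announcement, through leaked or pre-deployment testing outputs, or through anticipatory behaviour by incumbents. The Feedback leg of the triad (\cref{sec:triad}) also means that earlier outputs of related systems keep circulating.

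I will handle this by restricting attribution to causal effects flowing through deployment of $\mathcal{C}$ itself. I will make explicit the modelling assumption that pre-authorisation testing is sandboxed, with outputs that do not enter consumption or training data. The result then holds exactly, and I will state plainly that it is conditional on enforcement. I will note that relaxing sandboxing yields only a bound proportional to leaked reach, rather than zero.
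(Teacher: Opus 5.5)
Your proposal is correct and follows the same route as the paper. The precautionary scope mechanism bars deployment of the new class until DCVA review yields a CVM, so every C-HII sub-index contribution from that class is zero throughout the review window; your counterfactual definition, the sub-index-by-sub-index check, and the explicit sandboxing and enforcement assumptions only make precise what the paper's brief proof leaves implicit (the paper also adds a remark about post-deployment boundedness under PCDS, which the stated claim does not need).
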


\begin{proof}
By the precautionary scope mechanism, new AI cultural application classes may
not deploy until DCVA review produces a CVM.
During the review period of length $T$, no deployment occurs in that
application class.
C-HII sub-index contributions from that class are therefore zero during~$T$.
After deployment, PCDS constraints bound the disempowerment trajectory.
Cultural disempowerment from the application class is thus bounded during any
finite deployment window, not unbounded as under current deployment-first
governance.
\end{proof}

\begin{proposition}
\label{prop:diversity}
If the CPGF's $\pi_d$ floor requirement is enforced, the Shannon entropy of
AI-generated cultural output is bounded away from zero, precluding convergence
to cultural monoculture within any single deployment period.
\end{proposition}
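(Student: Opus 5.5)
The plan is to read the statement directly off the definition of the Diversity Index in \cref{eq:pid}, together with the Tier~I Diversity Floor lever (\cref{sec:chii}). Write $\phi \in (0,1]$ for the floor set by the Tier~II DCVA process. Enforcement of the floor means $\pi_d \ge \phi$ at every reporting time within a deployment period. Everything hinges on turning this inequality about $\pi_d$ into an inequality about $S_\text{AI}$.

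First, assume $\phi>0$, since a zero floor enforces nothing. Because $\pi_d = \min(1, S_\text{AI}/S_\text{base})$, the two arguments of the $\min$ give two cases:
\begin{itemize}
\item If the minimum is $1$, then $S_\text{AI} \ge S_\text{base}$.
\item Otherwise $S_\text{AI}/S_\text{base} = \pi_d \ge \phi$.
\end{itemize}
In either case $S_\text{AI} \ge \phi\, S_\text{base}$. I will not need to track which case holds.

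Second, I need $S_\text{base}>0$, which is the main obstacle. It must be an explicit assumption: the jurisdiction's human cultural production is non-degenerate, i.e.\ it places positive mass on at least two categories of the taxonomy. This assumption is already implicit in \cref{eq:pid}, since the ratio $S_\text{AI}/S_\text{base}$ is undefined when $S_\text{base}=0$. I will state it as a standing hypothesis rather than derive it.

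Under that hypothesis, $S_\text{AI} \ge \phi S_\text{base} =: \varepsilon > 0$ uniformly over the deployment period.

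Third, I interpret ``cultural monoculture'' as the output distribution concentrating on a single taxonomy category. Shannon entropy vanishes exactly on point masses, so monoculture corresponds to $S_\text{AI} = 0$. Convergence to monoculture would mean $S_\text{AI} \to 0$ along the period. That is impossible when $S_\text{AI} \ge \varepsilon$ throughout, and this gives the conclusion.

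For more than bare nonvanishing, I would add that $H(p) \ge \varepsilon$ forces $\max_i p_i \le 1 - c(\varepsilon)$ for some $c(\varepsilon)>0$, by continuity of entropy on the compact simplex. So the output distribution stays quantitatively away from every point mass, not just away from zero entropy.

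One further caveat should be recorded. The bound holds only at times when enforcement is checked, for example the quarterly reporting times of Tier~I. Between checks it requires assuming the floor is maintained continuously. I would state this as part of what ``enforced'' means, rather than attempt to control inter-audit dynamics.
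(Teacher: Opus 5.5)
Your proposal is correct and follows the paper's proof essentially step for step: the floor $\pi_d \ge \pi_d^* > 0$ combined with \cref{eq:pid} gives $S_\text{AI} \ge \pi_d^* S_\text{base}$, the assumption $S_\text{base} > 0$ makes this lower bound strictly positive, and so $S_\text{AI} \to 0$ (monoculture) is excluded. Your extra step, that an entropy lower bound keeps $\max_i p_i$ away from $1$, is a valid strengthening provided the taxonomy is finite; and your ``uniformly over the period'' bound, like the paper's, tacitly needs $S_\text{base}$ to stay fixed over the period, which \cref{rem:enforcement} acknowledges.
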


\begin{proof}
The diversity floor requires $\pi_d \geq \pi_d^*$ for regulator-set minimum
$\pi_d^* > 0$.
By \cref{eq:pid}, this implies $S_\text{AI} \geq \pi_d^* \cdot S_\text{base}$.
Since $S_\text{base} > 0$ (the human cultural baseline is non-trivially diverse
by assumption), we have $S_\text{AI} \geq \pi_d^* \cdot S_\text{base} > 0$.
Convergence to monoculture requires $S_\text{AI} \to 0$, which is precluded.
\end{proof}

\begin{remark}
\label{rem:enforcement}
Both propositions assume enforcement capacity that may not exist uniformly
across jurisdictions.
The TCC's role in building cross-jurisdictional enforcement capacity
(\cref{sec:tcc}) is therefore a necessary condition for the CPGF's formal
properties to hold in practice.
Furthermore, \cref{prop:diversity} assumes the baseline $S_\text{base}$ is
stable; if AI-driven culture systematically reduces baseline human cultural
diversity over time, the floor must be recalibrated through the Tier~II
process to reflect a meaningful, not merely relative, diversity standard.
\end{remark}

\begin{proposition}
\label{prop:propagation}
Under the CPGF, cultural misalignment cannot propagate undetected to economic
and political systems for longer than the C-HII adaptive ratchet period
$\Delta T$ (\cref{app:ratchets}).
\end{proposition}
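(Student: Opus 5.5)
The plan is to argue, in the same proof-of-concept style as \cref{prop:prospective} and \cref{prop:diversity}, by combining three ingredients. The first is a monitoring assumption: every AI system with cultural reach above the Tier~I threshold reports its sub-index contributions on a schedule finer than $\Delta T$. The second is the adaptive ratchet rule of \cref{app:ratchets}. The third is the propagation structure of \cref{sec:propagate}, where culture is the entry point through which misalignment reaches the political and economic systems. I will first make ``cultural misalignment'' precise. I will say that misalignment \emph{materialises} at time $t_0$ in jurisdiction $j$ if some sub-index $\pi_{k,j,t}$, $k \in \{p,s,r,d\}$, declines by more than $\delta$ percentage points over a window of length at most $\Delta T$ starting at $t_0$. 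I will say it is \emph{detected} once a mandatory regulatory review is triggered.

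The key steps, in order, are these.
\begin{enumerate}
\item Show that any materialised misalignment produces, at the first reporting instant $t_1 \le t_0 + \Delta T$, a recorded decline exceeding $\delta$ in some sub-index. This uses mandatory quarterly reporting together with the fact that \cref{eq:chii} and \cref{eq:pip}--\cref{eq:pid} are computed from the reported quantities.
\item Invoke the ratchet rule: this recorded decline triggers review with the burden of proof on deployers, so detection occurs by $t_1$.
\item Argue that propagation to the political and economic systems runs \emph{through} the cultural sub-indices. By \cref{sec:propagate}, politics and economics inherit misalignment only via shifted cultural production, selection, participation or diversity, so any propagating misalignment has a cultural footprint captured by some $\pi_k$.
\end{enumerate}
Steps 1--3 together give that no propagating misalignment can remain undetected beyond $\Delta T$.

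The main obstacle is step 3, together with the hidden gap in step 1. Misalignment that is slow, with a per-window decline of at most $\delta$, can accumulate indefinitely without ever tripping the ratchet. Likewise, misalignment that alters values without moving any $\pi_k$ escapes detection; this is exactly the reflexivity problem of \cref{sec:reflex}. I would handle the first by adding a cumulative-drift clause, where the CVM auto-review of Tier~II is triggered when a sub-index crosses a fixed warning threshold. This bounds the total undetected decline by the warning margin, not just the rate. I would handle the second by stating explicitly, in the manner of \cref{rem:enforcement}, the assumption that the sub-index taxonomy is \emph{complete}: every propagation-relevant cultural change moves at least one $\pi_k$ by a detectable amount. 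Under these two assumptions the bound follows; without them the proposition should be read as conditional.
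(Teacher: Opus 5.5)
Your decomposition is the paper's own. Its proof sketch asserts that misalignment propagating through culture shows up as declines in $\pi_r$ and $\pi_d$. It then invokes the ratchet to force review within $\Delta T$, and hedges with ``provided C-HII measurement is accurate and enforcement is operational.'' That sketch tacitly relies on both things you flag: that a footprint exists, and that it exceeds $\delta$ inside a single window. Stating them as explicit assumptions is an improvement. Your completeness assumption is also less harmless than it may look. For value homogenisation, which the paper assigns to $\pi_d$, it fails in exactly the reflexive scenario: by \cref{eq:pid}, a proportional fall in $S_\text{AI}$ and $S_\text{base}$ leaves $\pi_d$ unchanged.

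The step that fails is your conclusion that, once the cumulative-drift clause is added, ``the bound follows.'' A fixed warning threshold bounds the size of undetected decline, not its duration. Suppose a sub-index sits a margin $m$ above its warning threshold and drifts down at rate $r \le \delta$ per $\Delta T$. Then the ratchet never fires, and the Tier~II auto-review fires only after about $(m/r)\,\Delta T$. That time is unbounded as $r \to 0$, and the review never arrives if the drift levels off above the threshold. What your assumptions actually give is a disjunction:
\begin{itemize}
\item misalignment that moves some $\pi_k$ by more than $\delta$ within a window is detected within $\Delta T$;
\item slower misalignment is detected before its cumulative decline exceeds $m$.
\end{itemize}
To get the proposition as stated, you need a rate assumption: every propagation-relevant misalignment materialises in your sense. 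Your definition already builds this in, the paper assumes it silently, and it makes the result close to definitional. You should say which of the two statements you are proving.

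There is also a smaller problem in step~1. With quarterly snapshots, a decline completed at $t_0+\Delta T$ is recorded only at the next reporting instant. In addition, differences between snapshots within a 12-month span can understate the continuous-time decline. Either define materialisation on reported values, or state the bound as $\Delta T$ plus one reporting interval.
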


\begin{proof}[Proof sketch]
Economic and political disempowerment propagated through cultural channels
(\cref{sec:propagate}) manifests as changes in human cultural participation
patterns---captured in $\pi_r$---and in the homogenisation of values and
political preferences---captured in $\pi_d$.
Declines in either sub-index beyond $\delta$ within period $\Delta T$
trigger mandatory regulatory review.
Propagation to economic and political systems therefore cannot proceed
undetected for longer than $\Delta T$, provided C-HII measurement is accurate
and enforcement is operational.
\end{proof}

\end{document}